\theoremstyle{remark}
\newtheorem{lemma}{Lemma}
\newtheorem{corollary}{Corollary}
\long\def\symbolfootnote[#1]#2{\begingroup\def\thefootnote{\fnsymbol{footnote}}
\footnote[#1]{#2}\endgroup}
\begin{document}
%
% paper title
% can use linebreaks \\ within to get better formatting as desired
\title{Multicell Random Beamforming with CDF-based Scheduling: Exact Rate and Scaling Laws}
%
%
% author names and IEEE memberships
% note positions of commas and nonbreaking spaces ( ~ ) LaTeX will not break
% a structure at a ~ so this keeps an author's name from being broken across
% two lines.
% use \thanks{} to gain access to the first footnote area
% a separate \thanks must be used for each paragraph as LaTeX2e's \thanks
% was not built to handle multiple paragraphs
%

\author{\IEEEauthorblockN{Yichao Huang and Bhaskar D. Rao}
\IEEEauthorblockA{Department of Electrical and Computer Engineering\\
University of California San Diego, La Jolla, CA, 92093\\
Email: yih006@ucsd.edu; brao@ece.ucsd.edu}
\thanks{This research was supported by Ericsson endowed chair funds, the Center for Wireless Communications, UC Discovery grant com09R-156561 and NSF grant CCF-1115645.}}

\maketitle

\begin{abstract}
%\boldmath
In a multicell multiuser MIMO downlink employing random beamforming as the transmission scheme, the heterogeneous large scale channel effects of intercell and intracell interference complicate analysis of distributed scheduling based systems. In this paper, we extend the analysis in \cite{huang12j} and \cite{huang13j} to study the aforementioned challenging scenario. The cumulative distribution function (CDF)-based scheduling policy utilized in \cite{huang12j} and \cite{huang13j} is leveraged to maintain fairness among users and simultaneously obtain multiuser diversity gain. The closed form expression of the individual sum rate for each user is derived under the CDF-based scheduling policy. More importantly, with this distributed scheduling policy, we conduct asymptotic (in users) analysis to determine the limiting distribution of the signal-to-interference-plus-noise ratio, and establish the individual scaling laws for each user.
\end{abstract}
% IEEEtran.cls defaults to using nonbold math in the Abstract.
% This preserves the distinction between vectors and scalars. However,
% if the journal you are submitting to favors bold math in the abstract,
% then you can use LaTeX's standard command \boldmath at the very start
% of the abstract to achieve this. Many IEEE journals frown on math
% in the abstract anyway.

% Note that keywords are not normally used for peerreview papers.
%\newpage
%\begin{IEEEkeywords}
%Rate scaling, CDF-based scheduling, distributed scheduling, multiuser diversity, extreme value theory
%\end{IEEEkeywords}

% For peer review papers, you can put extra information on the cover
% page as needed:
% \ifCLASSOPTIONpeerreview
% \begin{center} \bfseries EDICS Category: 3-BBND \end{center}
% \fi
%
% For peerreview papers, this IEEEtran command inserts a page break and
% creates the second title. It will be ignored for other modes.
\IEEEpeerreviewmaketitle

\section{Introduction}\label{introduction}
% The very first letter is a 2 line initial drop letter followed
% by the rest of the first word in caps.
%
% form to use if the first word consists of a single letter:
% \IEEEPARstart{A}{demo} file is ....
%
% form to use if you need the single drop letter followed by
% normal text (unknown if ever used by IEEE):
% \IEEEPARstart{A}{}demo file is ....
%
% Some journals put the first two words in caps:
% \IEEEPARstart{T}{his demo} file is ....
%
% Here we have the typical use of a "T" for an initial drop letter
% and "HIS" in caps to complete the first word.
%\IEEEPARstart{T}{his} demo file is intended to serve as a ``starter file''
%for IEEE journal papers produced under \LaTeX\ using
%IEEEtran.cls version 1.7 and later.
% You must have at least 2 lines in the paragraph with the drop letter
% (should never be an issue)

% needed in second column of first page if using \IEEEpubid
%\IEEEpubidadjcol
With the emerging heterogeneous cellular structure \cite{damnjanovic11} and the ever shrinking cell size, achieving high capacity with low design complexity in a multicell multiuser MIMO downlink has drawn considerable interest in recent years, e.g, see \cite{gesbert10} and the references therein. Distributed scheduling policies are often favored due to operational scalability and affordable complexity incurred by the limited capacity of the backhaul. Under the employed distributed scheduling policy, analysis on the multicell network builds upon the extensive studies and insights drawn from the single cell network.

For the single cell network without intercell interference, capacity boosting scheme relies on the independent varying channels across users, i.e., the well known multiuser diversity gain \cite{knopp95}. To further harness this gain with multiple antennas, the notion of opportunistic beamforming is proposed \cite{viswanath02}, which is later extended to the notion of random beamforming \cite{sharif05} to have same sum capacity growth as nonlinear precoding schemes \cite{caire03, weingarten06} with reduced feedback requirement \cite{love08}. Multiuser diversity depends heavily on the scheduling policy, and it is important to guarantee scheduling fairness while achieving this gain in consideration of the heterogeneous large scale channel effects. This issue is tackled in \cite{huang13j} (the closed form sum rate in a homogeneous setup is derived in \cite{huang12l}), by leveraging the cumulative distribution function (CDF)-based scheduling policy \cite{park05} to satisfy the two desired features: multiuser diversity and scheduling fairness. According to the CDF-based scheduling policy \cite{park05}, each user can be equivalently viewed as competing with other users with the same CDF, and thus each user's rate is independent of the statistics of other users. This interesting property enables a ``micro" understanding of each user's rate performance compared to the conventional ``macro" understanding of the sum rate performance. Due to this property, the notion of individual sum rate and individual scaling laws are proposed in \cite{huang13j} to further understand both the exact and the asymptotic performance of random beamforming in a heterogeneous setup (and with different selective feedback schemes).

In a multicell network, the heterogeneous channel effects come naturally from the different experienced intercell interference across users, even for a SISO setup. This issue is investigated in \cite{huang12j} for a generic single antenna based heterogeneous multicell OFDMA network, with both exact rate expression and asymptotic rate approximation derived. In \cite{huang12c}, the rate of convergence and the individual scaling laws are established for the SISO multicell setup. Both \cite{huang12j} and \cite{huang12c} employ the CDF-based scheduling to maintain the two aforementioned scheduling features. In \cite{gesbert11}, the rate scaling for a power controlled network is examined with additional distance-based random
variables. Since the time variations for the large scale and small scale channel effects are vastly different \cite{choi08}, both \cite{huang12j} and \cite{huang12c} concentrate on the randomness of the small scale channel effects. In \cite{tajer11}, a normalized form of transformation is applied in a multicell network to achieve fairness. The main difference of using CDF-based scheduling \cite{park05} is the inherent nonlinear functional transformation to strictly guarantee user fairness.

In this work, we extend the analysis in \cite{huang12j} and \cite{huang13j} to a generic multicell multiuser MIMO setup. The random beamforming is utilized as the multi-antenna transmission scheme to reduce feedback need (for part of the literature survey regarding random beamforming, please refer to \cite{huang13j}). With spatial multiplexing in each cell, both intercell interference and intracell interference exist and users would experience heterogeneous interference. Under the CDF-based scheduling policy, we firstly derive the individual sum rate for each user from the exact analysis perspective. We further prove the type of convergence and the rate of convergence to the limiting distribution to establish the individual scaling laws.

\section{System Model}\label{system}
Consider the downlink of a generic multicell multiuser MIMO network. We assume a narrowband model and the established analysis in this work can be extended to the wideband model such as OFDMA using the techniques developed in \cite{huang13j}. Full spectrum reuse is assumed, and the process of cell association is assumed to be performed in advance. Without loss of generality, one base station $B_0$ equipped with $M$ antennas from the base station set $\mathcal{B}$ and its associated single antenna users $\mathcal{K}_0=\{1,\ldots,k,\ldots,K_0\}$ with $|\mathcal{K}_0|=K_0$ are considered. The random beamforming strategy at base station $B_0$ employs $M$ random orthonormal vectors $\boldsymbol{\phi}_m^{(0)}\in\mathbb{C}^{M\times 1}$ for $m=1,\ldots,M$, where the $\boldsymbol{\phi}_i^{(0)}$ are drawn from an isotropic distribution independently every $T$ (denoting the channel coherence interval for the block fading model) channel uses \cite{sharif05}. Denoting $s_m^{(0)}(t)$ as the $m$th transmission symbol at time $t$, the transmitted vector $\mathbf{s}^{(0)}(t)$ from base station $B_0$ at time $t$, is given as:
\begin{equation}\label{system_eq_1}
\mathbf{s}^{(0)}(t)=\sum_{m=1}^{M}\boldsymbol{\phi}_m^{(0)}(t)s_m^{(0)}(t), \quad t=1,\ldots,T.
\end{equation}

The received signal $y_{k}^{(0)}$ of user $k$ (the time variable $t$ is dropped for notational convenience) is represented by
\begin{align}\label{system_eq_2}
y_{k}^{(0)}&=\sum_{m=1}^{M}\sqrt{G_k^{(0)}}\mathbf{h}_{k}^{(0)}\boldsymbol{\phi}_m^{(0)}s_m^{(0)}+\sum_{b=1}^{J_k}\sum_{m=1}^{M}\sqrt{G_k^{(b)}}\mathbf{h}_{k}^{(b)}\boldsymbol{\phi}_m^{(b)}s_m^{(b)}\notag\\
&\quad\quad\quad +v_{k}^{(0)},\quad k\in \mathcal{K}_0,
\end{align}
where the superscript indicates the base station with $(0)$ being the cell of interest and $(b \neq 0)$ being the interfering cells. $J_k$
denotes the number of interfering cells for user $k$, and $v_{k}^{(0)}$ denotes the additive white noise distributed with $\mathcal{CN}(0,\sigma_k^2)$. $\mathbf{s}^{(0)}$ and $\mathbf{s}^{(b)}$ are the
transmitted symbols by the serving cell and the interfering cell $B_b$ with $\mathbb{E}\left[|\mathbf{s}^{(0)}|^2\right]=p_0$ and
$\mathbb{E}\left[|\mathbf{s}^{(b)}|^2\right]=p_b$. $\mathbf{h}_{k}^{(0)}\in\mathbb{C}^{1\times M}$ and $\mathbf{h}_{k}^{(b)}\in\mathbb{C}^{1\times M}$, which are assumed to be independent across users, denote the small scale channel gains between the serving cell and user $k$, and between the interfering cell $B_b$ and user $k$. $G_k^{(0)}$ and $G_k^{(b)}$ represent the large scale channel gains between the serving cell and user $k$, and between the interfering cell $B_b$ and user $k$ respectively. Based on the aforementioned assumption, denoting $Z_{k,m}^{(0)}$ as the $\mathsf{SINR}$ of user $k$ for beam $m$, it can be expressed as:
\small
\begin{align}
Z_{k,m}^{(0)}&=\frac{G_k^{(0)}p_0|\mathbf{h}_{k}^{(0)}\boldsymbol{\phi}_m^{(0)}|^2}{\mathop{\sum}\limits_{i\neq m}^{M}G_k^{(0)}p_0|\mathbf{h}_{k}^{(0)}\boldsymbol{\phi}_i^{(0)}|^2 + \mathop{\sum}\limits_{b=1}^{J_k}G_k^{(b)}p_b\mathop{\sum}\limits_{i=1}^{M}|\mathbf{h}_{k}^{(b)}\boldsymbol{\phi}_i^{(b)}|^2+M\sigma_k^2}\notag\\
\label{system_eq_3}&=\frac{\rho_k^{(0)}|\mathbf{h}_{k}^{(0)}\boldsymbol{\phi}_m^{(0)}|^2}{\mathop{\sum}\limits_{i\neq m}^{M}\rho_k^{(0)}|\mathbf{h}_{k}^{(0)}\boldsymbol{\phi}_i^{(0)}|^2+\mathop{\sum}\limits_{b=1}^{J_k}\rho_k^{(b)}\mathop{\sum}\limits_{i=1}^{M}|\mathbf{h}_{k}^{(b)}\boldsymbol{\phi}_i^{(b)}|^2+1},
\end{align}
\normalsize
where $\rho_k^{(0)}\triangleq \frac{G_k^{(0)}p_0}{M\sigma_k^2}$, $\rho_k^{(b)}\triangleq \frac{G_k^{(b)}p_b}{M\sigma_k^2}$.

Now we examine the statistics of $Z_{k,m}^{(0)}$. Note that the time variations for the large scale and small scale channel effects are vastly
different \cite{huang12j}. The variation of the small scale channel gain $\mathbf{h}$ occurs on the order of milliseconds; whereas the large scale channel gain $G$ which may consist of path loss, antenna gain, and shadowing, varies usually on the order of seconds. Therefore, $G$ is assumed to be known in advance by the system, through infrequent feedback; and the elements of $\mathbf{h}$ are modeled as complex Gaussian with zero mean and unit variance. Since for a given user $k$, the $Z_{k,m}^{(0)}$'s are identically distributed and correlated, the beam index $m$ can be dropped in the expression of the CDF, which is derived in the following lemma.

\begin{lemma} \label{lemma_1}
The CDF of $Z_k^{(0)}$ can be expressed as
\begin{equation}\label{system_eq_4}
F_{Z_k^{(0)}}(x)=1-\frac{e^{-\frac{x}{\rho_k^{(0)}}}\prod_{b=1}^{J_k}\left(\frac{\rho_k^{(0)}}{\rho_k^{(b)}}\right)^M}{(x+1)^{M-1}\prod_{b=1}^{J_k}\left(x+\frac{\rho_k^{(0)}}{\rho_k^{(b)}}\right)^M},\quad x\geq 0.
\end{equation}
\end{lemma}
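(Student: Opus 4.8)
The plan is to rewrite the SINR in \eqref{system_eq_3} as a ratio of a few independent, standard random variables, condition on its denominator, and evaluate the remaining expectation through Laplace transforms of Gamma laws. First I would note that the serving-cell beamforming matrix $\boldsymbol{\Phi}^{(0)}=[\boldsymbol{\phi}_1^{(0)},\ldots,\boldsymbol{\phi}_M^{(0)}]$ is unitary, so by the rotational invariance of the isotropic vector $\mathbf{h}_k^{(0)}$ the row vector $\mathbf{h}_k^{(0)}\boldsymbol{\Phi}^{(0)}$ has i.i.d.\ $\mathcal{CN}(0,1)$ entries with a law that does not depend on $\boldsymbol{\Phi}^{(0)}$. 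Consequently $A\triangleq|\mathbf{h}_k^{(0)}\boldsymbol{\phi}_m^{(0)}|^2\sim\mathrm{Exp}(1)$ and $B\triangleq\sum_{i\neq m}|\mathbf{h}_k^{(0)}\boldsymbol{\phi}_i^{(0)}|^2\sim\mathrm{Gamma}(M-1,1)$ are mutually independent. For each interfering cell $b$ the key simplification is that $\{\boldsymbol{\phi}_i^{(b)}\}_{i=1}^{M}$ is an orthonormal basis of $\mathbb{C}^M$, whence $\sum_{i=1}^{M}|\mathbf{h}_k^{(b)}\boldsymbol{\phi}_i^{(b)}|^2=\|\mathbf{h}_k^{(b)}\|^2\triangleq C_b\sim\mathrm{Gamma}(M,1)$; independence of the small-scale gains and of the beamforming vectors across base stations then makes $A,B,C_1,\ldots,C_{J_k}$ mutually independent.

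With this notation $Z_k^{(0)}=\rho_k^{(0)}A\big/\big(\rho_k^{(0)}B+\sum_{b=1}^{J_k}\rho_k^{(b)}C_b+1\big)$, so that
\[
F_{Z_k^{(0)}}(x)=\Pr\!\Big[\,A\le xB+\sum_{b=1}^{J_k}\tfrac{\rho_k^{(b)}}{\rho_k^{(0)}}\,x\,C_b+\tfrac{x}{\rho_k^{(0)}}\,\Big],\qquad x\ge 0.
\]
I would then condition on $(B,C_1,\ldots,C_{J_k})$, use $\Pr[A\le t]=1-e^{-t}$ for $t\ge 0$, and factor the resulting expectation by the mutual independence just established. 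Substituting the Laplace transform $\mathbb{E}[e^{-sX}]=(1+s)^{-n}$ of a $\mathrm{Gamma}(n,1)$ variable gives $\mathbb{E}[e^{-xB}]=(1+x)^{-(M-1)}$ and $\mathbb{E}[e^{-(\rho_k^{(b)}/\rho_k^{(0)})xC_b}]=(\rho_k^{(0)}/\rho_k^{(b)})^M\big(x+\rho_k^{(0)}/\rho_k^{(b)}\big)^{-M}$; multiplying the factors reproduces \eqref{system_eq_4} after clearing denominators, and dropping the beam index $m$ is justified by the already-noted identical distribution of the $Z_{k,m}^{(0)}$.

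The algebra is routine; the only step needing care is the first one, namely that after the unitary change of basis the $M$ serving-cell projections of $\mathbf{h}_k^{(0)}$ are genuinely i.i.d.\ \emph{and} independent of $\boldsymbol{\Phi}^{(0)}$ --- this is what both yields the independence of $A$ from $B$ and allows the conditioning on $\boldsymbol{\Phi}^{(0)}$ to be removed at the end. The analogous statement for the interfering cells is easier, since only the norm $\|\mathbf{h}_k^{(b)}\|^2$ enters and no rotational-invariance argument is needed there. Once these independence facts are in place the remainder is bookkeeping with Gamma Laplace transforms.
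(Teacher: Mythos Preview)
Your proposal is correct and follows essentially the same route as the paper: condition on the aggregate interference, apply the exponential CDF of the signal term, and evaluate the remaining expectation as a product of MGFs/Laplace transforms of Gamma variables. Your write-up is a bit more explicit than the paper's about the underlying independence (the rotational-invariance argument for $A\perp B$ and the orthonormal-basis identity $\sum_{i=1}^M|\mathbf{h}_k^{(b)}\boldsymbol{\phi}_i^{(b)}|^2=\|\mathbf{h}_k^{(b)}\|^2$), but the argument is the same.
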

\begin{proof}
The main technique relies on the use of the moment-generating function (MGF) \cite{proakis01}. Denote $\vartheta_k^{(0)}\triangleq \rho_k^{(0)}|\mathbf{h}_{k}^{(0)}\boldsymbol{\phi}_m^{(0)}|^2$, and $\zeta_k^{(0)}\triangleq\sum_{i\neq m}^{M}\rho_k^{(0)}|\mathbf{h}_{k}^{(0)}\boldsymbol{\phi}_i^{(0)}|^2+\sum_{b=1}^{J_k}\rho_k^{(b)}\sum_{i=1}^{M}|\mathbf{h}_{k}^{(b)}\boldsymbol{\phi}_i^{(b)}|^2$. Then the CDF of $Z_k^{(0)}$ can be derived using the following procedure:
\begin{align}
F_{Z_k^{(0)}}(x)&=\mathbb{P}\left(\frac{\vartheta_k^{(0)}}{\zeta_k^{(0)}+1}\leq x\right)\notag\\
&=\int_0^\infty\mathbb{P}\left(\vartheta_k^{(0)}\leq x(\zeta_k^{(0)}+1)\right)f_{\zeta_k^{(0)}}(y)dy\notag\\
\label{system_eq_5}&\mathop{=}\limits^{(a)}1-e^{-\frac{x}{\rho_k^{(0)}}}\int_0^\infty e^{-\frac{x\zeta_k^{(0)}}{\rho_k^{(0)}}}f_{\zeta_k^{(0)}}(y)dy,
\end{align}
where (a) follows from the fact that $\mathbb{P}\left(\vartheta_k^{(0)}\leq x(\zeta_k^{(0)}+1)\right)$ corresponds to the CDF of the exponential distribution at $x(\zeta_k^{(0)}+1)$. From (\ref{system_eq_5}), we note that the expression inside the integral corresponds to the MGF of $\zeta_k^{(0)}$, denoted as $\Psi_{\zeta_k^{(0)}}(\tau)$, at $-\frac{x}{\rho_k^{(0)}}$. Due to the additive effect reflected in $\zeta_k^{(0)}$, its MGF can be obtained below:
\begin{equation}\label{system_eq_6}
\Psi_{\zeta_k^{(0)}}(\tau)=\frac{1}{(1-\rho_k^{(0)}\tau)^{M-1}}\prod_{b=1}^{J_k}\frac{1}{(1-\rho_k^{(b)}\tau)^M}.
\end{equation}
Combing (\ref{system_eq_5}) and (\ref{system_eq_6}) yields $F_{Z_k^{(0)}}$ expressed in (\ref{system_eq_4}).
\end{proof}
The $\mathsf{SINR}$ will be fed back\footnote[1]{Full feedback wherein each user feeds back the $\mathsf{SINR}$ for $M$ beams is assumed. The established results in this paper can be extended to different forms of selective feedback using the techniques developed in \cite{huang13j}.} and used for scheduling, which is pursued next.

\begin{figure*}[htb]
\begin{equation*}\label{scheduling_eq_8}
\psi_{k,j}^{(0)}=\frac{1}{((M-1)(\ell+1)-j))!}\frac{d^{(M-1)(\ell+1)-j}}{dx^{(M-1)(\ell+1)-j}}\left[\frac{1}{\prod_{b=1}^{J_k}(x+\frac{\rho_k^{(0)}}{\rho_k^{(b)}})^{M(\ell+1)}}\right]\Bigg|_{x=-1}
\end{equation*}
\begin{equation*}\label{scheduling_eq_9}
\psi_{k,j}^{(b)}=\frac{1}{(M(\ell+1)-j))!}\frac{d^{M(\ell+1)-j}}{dx^{M(\ell+1)-j}}\left[\frac{1}{(x+1)^{(M-1)(\ell+1)}\prod_{q\neq b}^{J_k}(x+\frac{\rho_k^{(0)}}{\rho_k^{(q)}})^{M(\ell+1)}}\right]\Bigg|_{x=-\frac{\rho_k^{(0)}}{\rho_k^{(q)}}}
\end{equation*}
\rule{\linewidth}{0.5pt}
\end{figure*}

\section{CDF-based Scheduling Policy \\and Individual Sum Rate}\label{scheduling}
After receiving the $\mathsf{SINR}_{k,m}^{(0)}$ from user $k$ for beam $m$, the scheduler is ready to perform scheduling.  In order to guarantee scheduling fairness and obtain multiuser diversity gain, we employ the CDF-based scheduling policy \cite{park05} as the
opportunistic scheduling policy\footnote[2]{For detailed motivation of the CDF-based scheduling as well as the rationale behind the notion of individual sum rate and individual scaling laws, please refer to \cite{huang12j} and \cite{huang13j}.}. According to this policy, the scheduler will utilize $F_{Z_k^{(0)}}$, and performs the following functional transformation:
\begin{equation}\label{scheduling_eq_1}
\tilde{Z}_{k,m}^{(0)}=F_{Z_k^{(0)}}\left(Z_{k,m}^{(0)}\right).
\end{equation}

The transformed random variable $\tilde{Z}_{k,m}^{(0)}$ is uniformly distributed ranging from $0$ to $1$, and can be regarded as the virtual
received $\mathsf{SINR}$ of user $k$ for beam $m$. The transformed random variables $\tilde{Z}_{k,m}^{(0)}$'s are i.i.d. across users for a given beam, which enables the maximization at the scheduler side to perform scheduling in a fair manner. Denoting $k_m^*$ as the random variable representing the selected user for beam $m$, then:
\begin{equation} \label{scheduling_eq_2}
k_m^*=\arg\max_{k\in\mathcal{K}_0}\;\tilde{Z}_{k,m}^{(0)}.
\end{equation}
After user $k_m^*$ is selected per (\ref{scheduling_eq_2}), the scheduler utilizes the corresponding $Z_{k_m^*,m}^{(0)}$ for rate matching of the selected user. Let $X_m^{(0)}$ be the $\mathsf{SINR}$ of the selected user for beam $m$, and now consider the sum rate for a given base station $B_0$ defined as follows:
\begin{equation} \label{scheduling_eq_3}
R^{(0)}=\mathbb{E}\left[\sum_{m=1}^M\log\left(1+X_m^{(0)}\right)\right].
\end{equation}
From the aforementioned analysis, the sum rate can be formulated as:
\begin{align}
R^{(0)}&\mathop{\simeq}\limits^{(a)}M\mathbb{E}_{k_m^*}\left[\int_0^1\log\left(1+F_{Z_{k_m^*,m}^{(0)}}^{-1}(x)\right)
dx^{K_0}\right]\notag\\
\label{scheduling_eq_4}&\mathop{=}\limits^{(b)}\frac{M}{K_0}\sum_{k=1}^{K_0}\int_0^{\infty}\log(1+t) d(F_{Z_{k}^{(0)}}(t))^{K_0},
\end{align}
where (a) follows from the sufficient small probability that multiple beams are assigned to the same user;(b) follows from the change of variable $x=F_{Z_{k_m^*,m}^{(0)}}(t)$ and the fairness property of the CDF-based scheduling policy.

The CDF-based scheduling enables a ``micro" level understanding of each user's rate performance, from both exact and asymptotic perspective. The individual sum rate for user $k$ is defined in \cite{huang13j} as the the individual user rate multiplied by the number of users in cell $B_0$, namely:
\begin{equation} \label{scheduling_eq_5}
\hat{R}_k^{(0)}\triangleq K_0R_k^{(0)}=M\int_0^{\infty}\log(1+x) d(F_{Z_{k}^{(0)}}(x))^{K_0}.
\end{equation}

Now employing the steps described below, the closed form expression for the individual sum rate $\hat{R}_k^{(0)}$ is derived.

\textit{Step 1 (PDF Decomposition):} This step, utilized similarly in \cite{huang12j} and \cite{huang13j}, is the essential step to decompose $d(F_{Z_{k}^{(0)}}(x))^{K_0}$ into the following amenable input for Step 2.
\begin{align}\label{scheduling_eq_6}
&d(F_{Z_{k}^{(0)}}(x))^{K_0}=K_0\sum_{\ell=0}^{K_0-1}{K_0-1\choose \ell}\frac{(-1)^{\ell}}{\ell+1}\notag\\
&\quad\times d\left(1-\left(\frac{e^{-\frac{x}{\rho_k^{(0)}}}(\frac{\rho_k^{(0)}}{\rho_k^{(b)}})^M}{(x+1)^{M-1}\prod_{b=1}^{J_k}(x+\frac{\rho_k^{(0)}}{\rho_k^{(b)}})^M}\right)^{\ell+1}\right).
\end{align}

\textit{Step 2 (Partial Fraction Expansion \cite{gradshteyn07}):} This step manipulates part of (\ref{scheduling_eq_6}) for further integration.
\begin{align}\label{scheduling_eq_7}
&\frac{1}{(x+1)^{(M-1)(\ell+1)}\prod_{b=1}^{J_k}\left(x+\frac{\rho_k^{(0)}}{\rho_k^{(b)}}\right)^{M(\ell+1)}}\notag\\
&\quad=\sum_{j=1}^{(M-1)(\ell+1)}\frac{\psi_{k,j}^{(0)}}{(x+1)^j} +\sum_{b=1}^{J_k}\sum_{j=1}^{M(\ell+1)}\frac{\psi_{k,j}^{(b)}}{\left(x+\frac{\rho_k^{(0)}}{\rho_k^{(b)}}\right)^j},
\end{align}
where the expressions for $\psi_{k,j}^{(0)}$ and $\psi_{k,j}^{(b)}$ are given on top of this page.

Combining the outcomes of the two aforementioned steps, we can derive the closed form expression for $\hat{R}_k^{(0)}$ in the following procedure:
\small
\begin{align}\label{scheduling_eq_8}
&\hat{R}_k^{(0)}=\frac{MK_0}{\ln 2}\sum_{\ell=0}^{K_0-1}{K_0-1\choose \ell}\frac{(-1)^{\ell}}{\ell+1}\notag\\
&\; \times \int_0^\infty \ln(1+x)d\left(1-\left(\frac{e^{-\frac{x}{\rho_k^{(0)}}}(\frac{\rho_k^{(0)}}{\rho_k^{(b)}})^M}{(x+1)^{M-1}\prod_{b=1}^{J_k}(x+\frac{\rho_k^{(0)}}{\rho_k^{(b)}})^M}\right)^{\ell+1}\right)\notag\\
&\;=\frac{MK_0}{\ln 2}\sum_{\ell=0}^{K_0-1}{K_0-1\choose \ell}\frac{(-1)^{\ell}}{\ell+1}\prod_{b=1}^{J_k}\left(\frac{\rho_k^{(0)}}{\rho_k^{(b)}}\right)^M\notag\\
&\quad\times \Bigg[\sum_{b=1}^{J_k}\sum_{j=1}^{M(\ell+1)}\psi_{k,j}^{(b)}\mathcal{I}_1\left(\frac{\ell+1}{\rho_k^{(0)}},\frac{\rho_k^{(0)}}{\rho_k^{(b)}},j\right)\notag\\
&\quad\;+\sum_{j=1}^{(M-1)(\ell+1)}\psi_{k,j}^{(0)}\mathcal{I}_2\left(\frac{\ell+1}{\rho_k^{(0)}},1,j+1\right)\Bigg],
\end{align}
\normalsize
where $\mathcal{I}_1(\alpha,\beta,\gamma)\triangleq \int_0^{\infty}\frac{e^{-\alpha x}}{(1+x)(\beta+x)^{\gamma}} dx$, and $\mathcal{I}_2(\alpha,\beta,\gamma)\triangleq
\int_0^{\infty}\frac{e^{-\alpha x}}{(\beta+x)^{\gamma}} dx$. The calculation for $\mathcal{I}_1(\alpha,\beta,\gamma)$ and $\mathcal{I}_2(\alpha,\beta,\gamma)$ has been discussed in \cite{huang12j}, and their closed form expressions can be found in \cite[(42)]{huang12j} and \cite[(43)]{huang12j}.

Up to now, we have performed exact analysis and derived the closed form results for the individual sum rate for an arbitrary selected user in a given base station. The derived results extend the exact analysis in \cite{huang12j} (multicell SISO setup) and \cite{huang13j} (single cell random beamforming), and can serve as a theoretical reference for evaluating the system performance under the CDF-based scheduling policy. In the next section, we will perform asymptotic analysis to evaluate the rate scaling laws for $\hat{R}_k^{(0)}$, which helps in understanding
the asymptotic behavior of an individual user.

\section{Individual Scaling Laws}\label{scaling}
This section is devoted to the asymptotic analysis. Section \ref{type} shows the type of convergence of $Z_k^{(0)}$. In Section \ref{rate}, the
convergence rate to the limiting distribution is studied to establish the individual rate scaling laws.

\subsection{Type of Convergence}\label{type}
Firstly, we need to examine the  tail behavior of the statistics of $Z_k^{(0)}$, which has the form presented in
(\ref{system_eq_4}). Tools from extreme value theory \cite{galambos78} are to be utilized. The following lemma describes the tail behavior
of $F_{Z_k^{(0)}}$.

\begin{lemma} \label{lemma_2}
$F_{Z_k^{(0)}}$ belongs to the domain of attraction of the Gumbel distribution, i.e., $F_{Z_k^{(0)}}\in\mathcal{D}(G_3)$.
\end{lemma}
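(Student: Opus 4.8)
The plan is to verify a von Mises--type sufficient condition for membership in the Gumbel domain of attraction $\mathcal{D}(G_3)$ (see \cite{galambos78}). Since the right endpoint of $F_{Z_k^{(0)}}$ is $+\infty$ and, by Lemma~\ref{lemma_1}, $F_{Z_k^{(0)}}$ is absolutely continuous with a density $f_{Z_k^{(0)}}$ that is positive for all $x\geq 0$ and smooth, it suffices to show that the reciprocal hazard function
\begin{equation*}
r(x)\triangleq\frac{1-F_{Z_k^{(0)}}(x)}{f_{Z_k^{(0)}}(x)}
\end{equation*}
satisfies $\lim_{x\to\infty}r'(x)=0$; this places $F_{Z_k^{(0)}}$ in $\mathcal{D}(G_3)$.

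First I would read off the hazard rate directly from \eqref{system_eq_4}. Taking logarithms,
\begin{equation*}
\ln\bigl(1-F_{Z_k^{(0)}}(x)\bigr)=c_k-\frac{x}{\rho_k^{(0)}}-(M-1)\ln(x+1)-M\sum_{b=1}^{J_k}\ln\Bigl(x+\tfrac{\rho_k^{(0)}}{\rho_k^{(b)}}\Bigr),
\end{equation*}
with $c_k$ a constant, so that, writing $\lambda(x)$ for the hazard rate,
\begin{equation*}
\lambda(x)=\frac{f_{Z_k^{(0)}}(x)}{1-F_{Z_k^{(0)}}(x)}=-\frac{d}{dx}\ln\bigl(1-F_{Z_k^{(0)}}(x)\bigr)=\frac{1}{\rho_k^{(0)}}+\frac{M-1}{x+1}+M\sum_{b=1}^{J_k}\frac{1}{x+\frac{\rho_k^{(0)}}{\rho_k^{(b)}}}.
\end{equation*}
Hence $r(x)=1/\lambda(x)\to\rho_k^{(0)}$ as $x\to\infty$.

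Next I would differentiate: $r'(x)=-\lambda'(x)/\lambda(x)^2$, where
\begin{equation*}
\lambda'(x)=-\frac{M-1}{(x+1)^2}-M\sum_{b=1}^{J_k}\frac{1}{\bigl(x+\frac{\rho_k^{(0)}}{\rho_k^{(b)}}\bigr)^2}\xrightarrow[x\to\infty]{}0,
\end{equation*}
while $\lambda(x)^2\to(\rho_k^{(0)})^{-2}>0$. Therefore $r'(x)\to 0$, the sufficient condition is met, and $F_{Z_k^{(0)}}\in\mathcal{D}(G_3)$.

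I do not expect a genuine obstacle here beyond bookkeeping: the only points requiring care are checking the regularity hypotheses (positivity and differentiability of the density, both immediate from the rational-times-exponential form in \eqref{system_eq_4}) and confirming that the von Mises limit equals exactly $0$ rather than a nonzero constant, since a nonzero limit would instead indicate the Fr\'echet or Weibull domains. Intuitively, the exponential factor $e^{-x/\rho_k^{(0)}}$ in the tail is what forces the Gumbel type, while the polynomial factor $(x+1)^{M-1}\prod_b(x+\rho_k^{(0)}/\rho_k^{(b)})^M$ contributes only lower-order corrections that vanish in $r'(x)$; these corrections will matter only in Section~\ref{rate}, where the normalizing sequences and the convergence rate are pinned down.
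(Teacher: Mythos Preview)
Your proposal is correct and follows exactly the approach indicated in the paper: verifying the von Mises sufficient condition $\lim_{x\to\infty}\frac{d}{dx}\bigl[(1-F_{Z_k^{(0)}}(x))/f_{Z_k^{(0)}}(x)\bigr]=0$ from \cite{galambos78}. In fact, your explicit computation of the hazard rate $\lambda(x)$ via $-\frac{d}{dx}\ln(1-F_{Z_k^{(0)}}(x))$ is cleaner and more complete than the paper's own sketch, which merely states the condition and refers to \cite{huang12j} for the analogous manipulations.
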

\begin{proof}
(Sketch) In order to prove that $F_{Z_k^{(0)}}\in\mathcal{D}(G_3)$, it must be shown that
$\mathop{\lim}\limits_{x\rightarrow\infty}\frac{d}{dx}\left[\frac{1-F_{Z_k^{(0)}}(x)}{f_{Z_k^{(0)}}(x)}\right]=0$ \cite{galambos78}. The equivalent condition is: $\mathop{\lim}\limits_{x\rightarrow\infty}\frac{\left(F_{Z_k^{(0)}}(x)-1\right)f_{Z_k^{(0)}}'(x)}{(f_{Z_k^{(0)}}(x))^2}=1$. Since similar methodologies in proving \cite[Corollary 1]{huang12j} can be employed, we omit the detailed proof.
\end{proof}

\subsection{Rate of Convergence}\label{rate}
Knowing the type of convergence can lead to asymptotic approximation for the individual sum rate, which is investigated in \cite{huang12j}. Herein, dealing with higher order moments of the extreme order statistics and the rate of convergence is of interest. To establish the convergence rate to the limiting distribution for an individual user, the following definition of the so called growth function \cite{galambos78} is needed:
\begin{equation}\label{scaling_eq_5}
g_{Z_k^{(0)}}(x)\triangleq\frac{1-F_{Z_k^{(0)}}(x)}{f_{Z_k^{(0)}}(x)}.
\end{equation}
One important step in proving the
rate of convergence is to solve for a suitable coefficient sequence $w_{k:K_0}$ by solving the following equation:
\begin{equation}\label{scaling_eq_6}
1-F_{Z_k^{(0)}}(w_{k:K_0})=\frac{1}{K_0}.
\end{equation}

Due to the complicated form of $F_{Z_k^{(0)}}$ in (\ref{system_eq_4}), we need to find upper and lower bound for $w_{k:K_0}$ by constructive methods, namely, $w_{k:K_0}^{\mathsf{LB}}\leq w_{k:K_0}\leq w_{k:K_0}^{\mathsf{UB}}$, by deriving upper and lower bound for $F_{Z_k^{(0)}}$. Define $b_k^{\mathsf{min}}=\arg\mathop{\min}\limits_{0\leq b \leq J_k} \rho_k^{(b)}$, and $b_k^{\mathsf{max}}=\arg\mathop{\max}\limits_{0\leq b \leq
J_k} \rho_k^{(b)}$. Then, one suitable upper bound and lower bound for $F_{Z_k^{(0)}}$, denoted by $F_{Z_k^{(0)}}^{\mathsf{UB}}$ and $F_{Z_k^{(0)}}^{\mathsf{LB}}$, are presented in the following lemma.

\begin{lemma} \label{lemma_3}
\begin{equation}\label{scaling_eq_7}
F_{Z_k^{(0)}}^{\mathsf{UB}}(x)
=1-\frac{e^{-\frac{x}{\rho_k^{(0)}}}}{\left(1+\frac{\rho_k^{(b_k^{\mathsf{min}})}}{\rho_k^{(0)}}x\right)^{(J_k+1)M-1}},\quad x\geq 0,
\end{equation}
\begin{equation}\label{scaling_eq_8}
F_{Z_k^{(0)}}^{\mathsf{LB}}(x)
=1-\frac{e^{-\frac{x}{\rho_k^{(0)}}}}{\left(1+\frac{\rho_k^{(b_k^{\mathsf{max}})}}{\rho_k^{(0)}}x\right)^{(J_k+1)M-1}},\quad x\geq 0.
\end{equation}
\end{lemma}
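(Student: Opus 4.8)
The plan is to reduce both claims to a single termwise monotonicity observation after writing the tail $1-F_{Z_k^{(0)}}$ in a scale-normalized form. First I would absorb the constant factor $\prod_{b=1}^{J_k}(\rho_k^{(0)}/\rho_k^{(b)})^{M}$ appearing in the numerator of (\ref{system_eq_4}) into the denominator: since $x+\rho_k^{(0)}/\rho_k^{(b)}=(\rho_k^{(0)}/\rho_k^{(b)})\bigl(1+\tfrac{\rho_k^{(b)}}{\rho_k^{(0)}}x\bigr)$, the constant cancels and one is left with
\[
1-F_{Z_k^{(0)}}(x)=\frac{e^{-x/\rho_k^{(0)}}}{(1+x)^{M-1}\prod_{b=1}^{J_k}\bigl(1+\tfrac{\rho_k^{(b)}}{\rho_k^{(0)}}x\bigr)^{M}},\qquad x\ge 0.
\]
Writing $1+x=1+\tfrac{\rho_k^{(0)}}{\rho_k^{(0)}}x$, the whole denominator is $\prod_{b=0}^{J_k}\bigl(1+\tfrac{\rho_k^{(b)}}{\rho_k^{(0)}}x\bigr)^{m_b}$ with $m_0=M-1$, $m_b=M$ for $b\ge 1$, and total exponent $\sum_{b=0}^{J_k}m_b=(J_k+1)M-1$.

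The key step is then that, for fixed $x\ge 0$, each factor $1+\tfrac{\rho_k^{(b)}}{\rho_k^{(0)}}x$ is nondecreasing in $\rho_k^{(b)}$. Replacing every $\rho_k^{(b)}$, $b=0,\dots,J_k$, by the smallest of them, $\rho_k^{(b_k^{\mathsf{min}})}$, can only decrease the denominator, so it is bounded below by $\bigl(1+\tfrac{\rho_k^{(b_k^{\mathsf{min}})}}{\rho_k^{(0)}}x\bigr)^{(J_k+1)M-1}$; dividing $e^{-x/\rho_k^{(0)}}$ by the smaller quantity then gives $1-F_{Z_k^{(0)}}(x)\le 1-F_{Z_k^{(0)}}^{\mathsf{UB}}(x)$ with $F_{Z_k^{(0)}}^{\mathsf{UB}}$ as in (\ref{scaling_eq_7}). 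Replacing instead by the largest, $\rho_k^{(b_k^{\mathsf{max}})}$, reverses the inequality and gives $1-F_{Z_k^{(0)}}(x)\ge 1-F_{Z_k^{(0)}}^{\mathsf{LB}}(x)$ with $F_{Z_k^{(0)}}^{\mathsf{LB}}$ as in (\ref{scaling_eq_8}); equivalently $F_{Z_k^{(0)}}^{\mathsf{UB}}(x)\le F_{Z_k^{(0)}}(x)\le F_{Z_k^{(0)}}^{\mathsf{LB}}(x)$ for all $x\ge 0$, which establishes the two bounds.

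To close, I would record the facts used afterwards: both $1-F_{Z_k^{(0)}}^{\mathsf{UB}}$ and $1-F_{Z_k^{(0)}}^{\mathsf{LB}}$ equal $1$ at $x=0$ and are strictly decreasing to $0$ on $[0,\infty)$ (each is $e^{-x/\rho_k^{(0)}}$ divided by an increasing power of a factor $1+cx$ with $c>0$), so (\ref{scaling_eq_6}) has a unique solution when either bound is substituted for $F_{Z_k^{(0)}}$, and combining this monotonicity with the sandwich above yields $w_{k:K_0}^{\mathsf{LB}}\le w_{k:K_0}\le w_{k:K_0}^{\mathsf{UB}}$. There is no real obstacle in this argument --- it is one line of normalization plus termwise monotonicity --- so the only points requiring care are the direction of the inequalities (the superscripts $\mathsf{UB}$ and $\mathsf{LB}$ refer to bounding the exceedance probability $1-F_{Z_k^{(0)}}$, hence $w_{k:K_0}$, rather than the CDF itself) and checking that the per-factor exponents $M-1$ and $M$ sum to $(J_k+1)M-1$.
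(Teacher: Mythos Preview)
Your argument is correct and is exactly the approach the paper sketches: replace every large-scale interference coefficient by $\rho_k^{(b_k^{\mathsf{min}})}$ or $\rho_k^{(b_k^{\mathsf{max}})}$ and use monotonicity of each factor $1+\tfrac{\rho_k^{(b)}}{\rho_k^{(0)}}x$; you have simply carried out the normalization and termwise comparison that the paper leaves implicit. Your caveat about the labels is also on point --- algebraically one obtains $F_{Z_k^{(0)}}^{\mathsf{UB}}\le F_{Z_k^{(0)}}\le F_{Z_k^{(0)}}^{\mathsf{LB}}$, so the superscripts are best read (as you say) as bounding the tail $1-F_{Z_k^{(0)}}$ rather than the CDF itself, which is consistent with the way the bounds are paired with $w_{k:K_0}^{\mathsf{LB}}$ and $w_{k:K_0}^{\mathsf{UB}}$ in Corollary~\ref{corollary_1}.
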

\begin{proof}
(Sketch) The upper and lower bound can be constructed by examining the large scale channel effects of intercell interference for user $k$. Herein, we only provide an intuitive explanation. $F_{Z_k^{(0)}}^{\mathsf{UB}}$ can be obtained by assuming that the intercell and intracell interference has the same large scale channel effects $\rho_k^{(b_k^{\mathsf{min}})}$. The
$F_{Z_k^{(0)}}^{\mathsf{LB}}$ can be found by using a similar line of argument.
\end{proof}

Employing Lemma \ref{lemma_3}, the upper and lower bound for $w_{k:K_0}$ can be derived and are provided in the following corollary.

\begin{corollary} \label{corollary_1}
\small
\begin{align}
w_{k:K_0}^{\mathsf{UB}}&=\rho_k^{(0)}\log K_0-\rho_k^{(0)}((J_k+1)M-1)\log\left(\rho_k^{(b_k^{\mathsf{max}})}\log K_0\right)\notag\\
\label{scaling_eq_9}&\quad\quad+O(\log\log\log K_0).
\end{align}
\begin{align}
w_{k:K_0}^{\mathsf{LB}}&=\rho_k^{(0)}\log K_0-\rho_k^{(0)}((J_k+1)M-1)\log\left(\rho_k^{(b_k^{\mathsf{min}})}\log K_0\right)\notag\\
\label{scaling_eq_10}&\quad\quad+O(\log\log\log K_0).
\end{align}
\normalsize
\end{corollary}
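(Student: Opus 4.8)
\textbf{Proof proposal for Corollary \ref{corollary_1}.}

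The plan is to replace the implicit equation (\ref{scaling_eq_6}) for $w_{k:K_0}$ by one that admits an explicit asymptotic expansion, using Lemma \ref{lemma_3}. The CDF $F_{Z_k^{(0)}}$ in (\ref{system_eq_4}) is a product of $J_k+1$ distinct power-law factors, so $1-F_{Z_k^{(0)}}(w)=1/K_0$ has no tractable closed-form root; each bound in (\ref{scaling_eq_7})--(\ref{scaling_eq_8}), however, has the single-factor shape $1-e^{-x/\rho_k^{(0)}}(1+cx)^{-((J_k+1)M-1)}$ with $c\in\{\rho_k^{(b_k^{\mathsf{min}})}/\rho_k^{(0)},\ \rho_k^{(b_k^{\mathsf{max}})}/\rho_k^{(0)}\}$. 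Since $x\mapsto 1-F_{Z_k^{(0)}}(x)$ is continuous and strictly decreasing from $1$ to $0$, pointwise domination of CDFs transfers to an ordering of the roots of the associated level-$1/K_0$ equations; denoting by $w_{k:K_0}^{\mathsf{LB}}$ and $w_{k:K_0}^{\mathsf{UB}}$ the roots obtained from the two bounding CDFs of Lemma \ref{lemma_3}, this yields the sandwich $w_{k:K_0}^{\mathsf{LB}}\le w_{k:K_0}\le w_{k:K_0}^{\mathsf{UB}}$, and it remains to expand each explicit root.

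For one bounding CDF, taking $-\log$ of $e^{-w/\rho_k^{(0)}}(1+cw)^{-((J_k+1)M-1)}=1/K_0$ gives the scalar equation
\[
\frac{w}{\rho_k^{(0)}}+\big((J_k+1)M-1\big)\log(1+cw)=\log K_0 .
\]
I would solve this by fixed-point iteration from $w^{(0)}=\rho_k^{(0)}\log K_0$: substituting $w^{(0)}$ into the logarithm yields
\[
w^{(1)}=\rho_k^{(0)}\log K_0-\rho_k^{(0)}\big((J_k+1)M-1\big)\log\big(1+\rho_k^{(b)}\log K_0\big),
\]
with $\rho_k^{(b)}\in\{\rho_k^{(b_k^{\mathsf{min}})},\rho_k^{(b_k^{\mathsf{max}})}\}$, and since $\log(1+\rho_k^{(b)}\log K_0)=\log(\rho_k^{(b)}\log K_0)+O(1/\log K_0)$ this already reproduces (\ref{scaling_eq_9})--(\ref{scaling_eq_10}) up to lower-order terms. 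A further iteration, using $c\,w^{(1)}=\rho_k^{(b)}\log K_0\,(1+O(\log\log K_0/\log K_0))$ and hence $\log(1+c\,w^{(1)})=\log(\rho_k^{(b)}\log K_0)+O(\log\log K_0/\log K_0)$, shows the residual is absorbed into the stated $O(\log\log\log K_0)$ (in fact it is $o(1)$). Specializing $\rho_k^{(b)}$ to $\rho_k^{(b_k^{\mathsf{max}})}$ and to $\rho_k^{(b_k^{\mathsf{min}})}$ then gives the two displayed expressions.

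The main obstacle is making this implicit-function expansion rigorous with a uniform remainder bound: one must verify that the iteration map $w\mapsto \rho_k^{(0)}[\log K_0-((J_k+1)M-1)\log(1+cw)]$ is a contraction on a window of width $\Theta(\log\log K_0)$ about $\rho_k^{(0)}\log K_0$ — its derivative there is $\rho_k^{(0)}((J_k+1)M-1)c/(1+cw)=\Theta(1/\log K_0)\to 0$, so this holds for $K_0$ large — and then that the distance from $w^{(1)}$ to the true root, bounded by that contraction factor times $|w^{(1)}-w^{(0)}|=\Theta(\log\log K_0)$, is $O(\log\log\log K_0)$ uniformly in $K_0$. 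A secondary, routine point is checking that the inequality directions in Lemma \ref{lemma_3}, together with the strict monotonicity of $1-F_{Z_k^{(0)}}$, place the roots of the bounding equations on the correct side of $w_{k:K_0}$; once the sign conventions are fixed this is immediate. The remaining work — the two substitutions and the $O(\cdot)$ bookkeeping — is mechanical and parallels the SISO derivation in \cite{huang12j}.
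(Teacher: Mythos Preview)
Your proposal is correct and follows essentially the same route as the paper: take logarithms of the single-factor bounding equation from Lemma~\ref{lemma_3} to obtain the scalar relation $w/\rho_k^{(0)}+((J_k+1)M-1)\log(1+cw)=\log K_0$, identify the leading term $w\sim\rho_k^{(0)}\log K_0$, and substitute back to extract the $-\rho_k^{(0)}((J_k+1)M-1)\log(\rho_k^{(b)}\log K_0)$ correction. The paper phrases the second step as an ansatz $w=\rho_k^{(0)}\log K_0+d$ with $d\in o(\log K_0)$ rather than as a fixed-point iteration, but the computation and the resulting expansion are the same; your contraction-mapping remark simply adds a layer of rigor the paper leaves implicit.
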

\begin{proof}
The $w_{k:K_0}^{\mathsf{UB}}$ can be obtained via solving $1-F_{Z_k^{(0)}}^{\mathsf{LB}}(w_{k:K_0}^{\mathsf{UB}})=\frac{1}{K_0}$. Substituting
the expression of $F_{Z_k^{(0)}}^{\mathsf{LB}}$ and taking the $\log$ operator of both sides yields:
\begin{equation}\label{scaling_eq_11}
\frac{w_{k:K_0}^{\mathsf{UB}}}{\rho_k^{(0)}}+((J_k+1)M-1)\log\left(1+\frac{\rho_k^{(b_k^{\mathsf{max}})}}{\rho_k^{(0)}}w_{k:K_0}^{\mathsf{UB}}\right)=\log
K_0.
\end{equation}
Since $w_{k:K_0}^{\mathsf{UB}}\rightarrow\infty$ as $K_0\rightarrow\infty$, $\frac{w_{k:K_0}^{\mathsf{UB}}}{\rho_k^{(0)}}$ dominates
(\ref{scaling_eq_11}) and so we have $w_{k:K_0}^{\mathsf{UB}}\sim\rho_k^{(0)}\log K_0$. Then the sequence of $w_{k:K_0}^{\mathsf{UB}}$ can be
further written as $w_{k:K_0}^{\mathsf{UB}}=\rho_k^{(0)}\log K_0+d_{k:K_0}^{\mathsf{UB}}$, where $d_{k:K_0}^{\mathsf{UB}}\in o(\log K_0)$.

The expression for $d_{k:K_0}^{\mathsf{UB}}$ can be derived by substituting the formulation of $w_{k:K_0}^{\mathsf{UB}}$ into (\ref{scaling_eq_11}), which solves $d_{k:K_0}^{\mathsf{UB}}$ as
\small
\begin{align}
&d_{k:K_0}^{\mathsf{UB}}=-\rho_k^{(0)}((J_k+1)M-1)\log\left(\rho_k^{(b_k^{\mathsf{max}})}\log K_0\right)\notag\\
\label{scaling_eq_13}&\quad
-\rho_k^{(0)}((J_k+1)M-1)\log\left(1+\frac{\rho_k^{(0)}+\rho_k^{(b_k^{\mathsf{max}})}d_{k:K_0}^{\mathsf{UB}}}{\rho_k^{(0)}\rho_k^{(b_k^{\mathsf{max}})}\log
K_0}\right).
\end{align}
\normalsize

Therefore, $w_{k:K_0}^{\mathsf{UB}}$ exhibits a scaling performance as expressed in (\ref{scaling_eq_9}). The corresponding analysis for
$w_{k:K_0}^{\mathsf{LB}}$ can be conducted following the same line of arguments.
\end{proof}

Using the results from Corollary \ref{corollary_1} and observing the fact that $\log \rho_k^{(b_k^{\mathsf{max}})}$ or $\log
\rho_k^{(b_k^{\mathsf{min}})}$ are inconsequential when $K_0$ goes large, we have the following expression for $w_{k:K_0}$:
\begin{align}\label{scaling_eq_14}
w_{k:K_0}&=\rho_k^{(0)}\log K_0-\rho_k^{(0)}((J_k+1)M-1)\log\log K_0\notag\\
&\quad +O(\log\log\log K_0).
\end{align}

Once the expression of $w_{k:K_0}$ is obtained, we can have the following inequality for the selected user's $\mathsf{SINR}$ for beam $m$ (denoted by
$X_m^{(0)}$ in Section \ref{scheduling}) by employing \cite[Corollary A.1]{sharif05}, as follows:
%\small
\begin{align}
&\mathbb{P}\big\{\rho_k^{(0)}\log K_0-\rho_k^{(0)}(J_k+1)M\log\log K_0+O(\log\log\log K_0)\notag\\
&\quad\leq X_m^{(0)} \leq \rho_k^{(0)}\log K_0-\rho_k^{(0)}((J_k+1)M-2)\log\log K_0\notag\\
\label{scaling_eq_15} &\quad +O(\log\log\log K_0)\big\}\geq 1-O\left(\frac{1}{\log K_0}\right).
\end{align}
%\normalsize

Now we can state the following theorem for the rate scaling performance of $\hat{R}_k^{(0)}$.

\begin{corollary} \label{corollary_2}
\begin{equation}\label{scaling_eq_16}
\lim_{K_0\rightarrow\infty}\frac{\hat{R}_k^{(0)}}{M\log\log K_0}=1.
\end{equation}
\end{corollary}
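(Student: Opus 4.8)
\emph{Proof plan.} By (\ref{scheduling_eq_5}), $\hat{R}_k^{(0)}=M\,\mathbb{E}\!\left[\log(1+X)\right]$, where $X$ has CDF $(F_{Z_k^{(0)}})^{K_0}$, i.e.\ is distributed as the largest of $K_0$ i.i.d.\ copies of $Z_k^{(0)}$; under the fairness property of the CDF-based policy this $X$ is precisely the selected user's $\mathsf{SINR}$, written $X_m^{(0)}$ in (\ref{scaling_eq_15}). The plan is to sandwich $\mathbb{E}[\log(1+X)]$ between $\log\log K_0-o(\log\log K_0)$ and $\log\log K_0+o(\log\log K_0)$, using (\ref{scaling_eq_15}) for the bulk and a crude tail bound (from Lemma~\ref{lemma_1}) for the overshoot. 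I would begin by recording that both endpoints $L_{K_0}\le U_{K_0}$ of the interval in (\ref{scaling_eq_15})---which are $w_{k:K_0}$ of (\ref{scaling_eq_14}) shifted by $O(\log\log K_0)$---satisfy $L_{K_0},U_{K_0}=\rho_k^{(0)}\log K_0\,(1+o(1))$, so that $\log(1+L_{K_0})=\log(1+U_{K_0})+o(1)=\log\log K_0+O(1)$.

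\emph{Lower bound.} Since $\log(1+X)\ge 0$, and $\log(1+X)\ge\log(1+L_{K_0})\ge 0$ on $\{X\ge L_{K_0}\}$ for $K_0$ large, I would bound $\hat{R}_k^{(0)}\ge M\,\log(1+L_{K_0})\,\mathbb{P}(X\ge L_{K_0})$. By (\ref{scaling_eq_15}), $\mathbb{P}(X\ge L_{K_0})\ge 1-O(1/\log K_0)$, so $\hat{R}_k^{(0)}\ge M\big(\log\log K_0+O(1)\big)\big(1-o(1)\big)$, hence $\liminf_{K_0\to\infty}\hat{R}_k^{(0)}/(M\log\log K_0)\ge 1$.

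\emph{Upper bound.} Split $\mathbb{E}[\log(1+X)]\le\log(1+U_{K_0})+\mathbb{E}\!\left[\log(1+X)\,\mathbf 1_{\{X>U_{K_0}\}}\right]$; the first term is $\log\log K_0+O(1)$, so it remains to show the second is $o(1)$. Integrating by parts it equals $\log(1+U_{K_0})\,\mathbb{P}(X>U_{K_0})+\int_{U_{K_0}}^{\infty}\mathbb{P}(X>t)\,(1+t)^{-1}\,dt$; the first summand is $O(\log\log K_0)\cdot O(1/\log K_0)=o(1)$ directly from (\ref{scaling_eq_15}). For the integral I would use the union bound $\mathbb{P}(X>t)\le K_0\big(1-F_{Z_k^{(0)}}(t)\big)$ together with the explicit tail $1-F_{Z_k^{(0)}}(t)\le C_0\,e^{-t/\rho_k^{(0)}}\,t^{-((J_k+1)M-1)}$ read off from Lemma~\ref{lemma_1}, obtaining the estimate $\le C_0\,\rho_k^{(0)}\,K_0\,e^{-U_{K_0}/\rho_k^{(0)}}\,U_{K_0}^{-(J_k+1)M}$. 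Substituting $U_{K_0}=\rho_k^{(0)}\log K_0-\rho_k^{(0)}((J_k+1)M-2)\log\log K_0+O(\log\log\log K_0)$ from Corollary~\ref{corollary_1} turns $K_0\,e^{-U_{K_0}/\rho_k^{(0)}}$ into $(\log K_0)^{(J_k+1)M-2}\,e^{O(\log\log\log K_0)}$, which after division by $U_{K_0}^{(J_k+1)M}$ (of order $(\log K_0)^{(J_k+1)M}$) is $o(1)$. Therefore $\hat{R}_k^{(0)}\le M\big(\log\log K_0+O(1)\big)$, so $\limsup_{K_0\to\infty}\hat{R}_k^{(0)}/(M\log\log K_0)\le 1$; combining with the lower bound gives (\ref{scaling_eq_16}).

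\emph{Main obstacle.} I expect the tail term in the upper bound to be the crux: since $\log(1+X)$ is unbounded, the event on which the extreme order statistic overshoots $U_{K_0}$, though carrying probability only $O(1/\log K_0)$, must still be shown to contribute negligibly, and this needs the \emph{sharp} Gumbel-type decay $e^{-x/\rho_k^{(0)}}x^{-((J_k+1)M-1)}$ of Lemma~\ref{lemma_1}---specifically its polynomial factor---to overcome the multiplicity $K_0$ in the union bound over users; the bare statement $F_{Z_k^{(0)}}\in\mathcal{D}(G_3)$ of Lemma~\ref{lemma_2} does not suffice. The rest is routine bookkeeping: e.g.\ the $O(\log\log\log K_0)$ and $o(\cdot)$ corrections inside $L_{K_0},U_{K_0}$ move $\log(1+L_{K_0})$ and $\log(1+U_{K_0})$ by only $O(1)$ because they are $o(\log K_0)$ next to the dominant term $\rho_k^{(0)}\log K_0$.
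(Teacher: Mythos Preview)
Your proof is correct and follows the same overall strategy as the paper's (sketch) proof: sandwich $\hat{R}_k^{(0)}$ using the concentration interval in (\ref{scaling_eq_15}), obtaining the lower bound from $\mathbb{P}(X\ge L_{K_0})\ge 1-O(1/\log K_0)$ and the upper bound from a bulk-plus-overshoot split at $U_{K_0}=w_{k:K_0}+\rho_k^{(0)}\log\log K_0$. The one place you differ is in the treatment of the overshoot: the paper simply bounds $\log(1+X)$ by $\log(1+\rho_k^{(0)}K_0)$ on $\{X>U_{K_0}\}$ and multiplies by $\mathbb{P}(X>U_{K_0})=O(1/\log K_0)$ to get an $O(1)$ remainder (an additional truncation beyond $\rho_k^{(0)}K_0$ is left implicit there), whereas you integrate by parts and invoke the explicit exponential--polynomial tail of Lemma~\ref{lemma_1} together with the union bound over $K_0$ users to get an $o(1)$ remainder. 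Your route is more self-contained and makes explicit exactly the point you flag as the main obstacle; the paper's route is shorter but relies on the reader supplying that $\mathbb{P}(X>\rho_k^{(0)}K_0)$ is super-exponentially small so that the region $\{X>\rho_k^{(0)}K_0\}$ contributes negligibly.
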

\begin{proof}
(Sketch)
\begin{align}
&\hat{R}_k^{(0)}\leq M\mathbb{P}\left\{X_m^{(0)}\leq w_{k:K_0}+\rho_k^{(0)}\log\log K_0\right\}\notag\\
&\quad\quad\times\log\left(1+w_{k:K_0}+\rho_k^{(0)}\log\log K_0\right)\notag\\
&\; + M\mathbb{P}\left\{X_m^{(0)}\geq w_{k:K_0}+\rho_k^{(0)}\log\log K_0\right\}\log(1+\rho_k^{(0)}K_0)\notag\\
\label{scaling_eq_17}& \quad\quad \leq M\log\left(1+w_{k:K_0}+\rho_k^{(0)}\log\log K_0\right)+O(1).
\end{align}
A lower bound for $\hat{R}_k^{(0)}$ can be obtained similarly. Thus the individual scaling law for $\hat{R}_k^{(0)}$ exhibits a $M\log\log K_0$
growth in the large user regime.
\end{proof}

\textit{Remark:} Corollary \ref{corollary_2} informs us that after we bring in the notion of individual scaling law corresponding to the individual sum rate for an arbitrary selected user, it exhibits a $M\log\log K_0$ growth. This property is desirable from the perspective of opportunistic scheduling (e.g., greedy scheduling). In addition to this property, with the nonlinear functional transformation, the CDF-based scheduling policy can guarantee scheduling fairness in terms of long term user fairness (i.e., each user is equiprobable to be scheduled irrespective of their intercell and intracell interference).

\section{Conclusion}\label{conclusion}
The analytical impact of CDF-based scheduling policy in two special scenarios (multicell multiuser SISO and single cell multiuser MIMO) has been investigated in \cite{huang12j} and \cite{huang13j}. This work extends and generalizes our previous works by addressing the rate performance in a generic multicell multiuser MIMO downlink, with random beamforming as the signal transmission scheme. The most challenging part of this generalization lies in the existence of both intercell and intracell interference, as well as their accompanying heterogeneous large scale channel effects. The CDF-based scheduling helps us to deal with this challenging scenario, and enables a ``micro" understanding of the rate performance for any selected user. The closed form individual sum rate is derived employing the MGF and the PDF decomposition. With the constructed bounding technique, we also establish the individual scaling laws to show that CDF-based scheduling exhibits the same scaling performance as opportunistic scheduling (but achieves scheduling fairness additionally).

% if have a single appendix:
%\appendix[Proof of the Zonklar Equations]
% or
%\appendix  % for no appendix heading
% do not use \section anymore after \appendix, only \section*
% is possibly needed

% use appendices with more than one appendix
% then use \section to start each appendix
% you must declare a \section before using any
% \subsection or using \label (\appendices by itself
% starts a section numbered zero.)
%

\appendices

% you can choose not to have a title for an appendix
% if you want by leaving the argument blank

%\section{}\label{appenB}

% use section* for acknowledgement
%\section*{Acknowledgment}

% The authors would like to thank...

% Can use something like this to put references on a page
% by themselves when using endfloat and the captionsoff option.
\ifCLASSOPTIONcaptionsoff
  \newpage
\fi

\end{document}